\newtheorem{theorem}{Theorem}[section]
\newtheorem{proposition}[theorem]{Proposition}
\newtheorem{lemma}[theorem]{Lemma}
\newtheorem{remark}[theorem]{Remark}
\theoremstyle{definition}
\newtheorem{definition}[theorem]{Definition}
\numberwithin{equation}{section}
\numberwithin{figure}{section}
\numberwithin{table}{section}
\newcommand{\wutilde}[1]{\vrule depth 0pt width 0pt%
{\raise0.8pt\hbox{$\smash{{\mathop{#1} \limits_{\displaystyle\widetilde{}}}}$}}}
\newcommand{\wuhat}[1]{\vrule depth 0pt width 0pt%
{\raise0.6pt\hbox{$\smash{{\mathop{#1} \limits_{\displaystyle\widehat{}}}}$}}}
\newcommand{\ol}[1]{\overline{#1}}
\newcommand{\ul}[1]{\underline{#1}}
\newcommand{\al}{\alpha}
\newcommand{\de}{\delta}
\newcommand{\ga}{\gamma}
\newcommand{\ep}{\bm{\epsilon}}
\newcommand{\PDE}{P$\Delta$E}
\newcommand{\bbZ}{\mathbb{Z}}
\newcommand{\bbR}{\mathbb{R}}
\newcommand{\bbC}{\mathbb{C}}
\newcommand{\bml}{{\bm l}}
\newcommand{\set}[2]{\left\{\left. #1 ~\right|~ #2 \right\}}
\newcommand{\tW}{\widetilde{W}}
\newcommand{\bi}{{\overline{i}}}
\long\def\@makecaption#1#2{
 \vskip 10pt
 \setbox\@tempboxa\hbox{#1. #2}
 \ifdim \wd\@tempboxa >\hsize #1. #2\par \else \hbox
to\hsize{\hfil\box\@tempboxa\hfil}
 \fi}
\begin{document}

\title[]{Reduction of quad-equations consistent around a cuboctahedron I: additive case}
\author{Nalini Joshi}
\address{School of Mathematics and Statistics F07, The University of Sydney, New South Wales 2006, Australia.}
\email{nalini.joshi@sydney.edu.au}
\author{Nobutaka Nakazono}
\address{Institute of Engineering, Tokyo University of Agriculture and Technology, 2-24-16 Nakacho Koganei, Tokyo 184-8588, Japan.}
\email{nakazono@go.tuat.ac.jp}
\begin{abstract}
In this paper, we consider a reduction of a new system of partial difference equations, which was obtained in our previous paper\cite{JN2019:arxiv1906.06650} and shown to be consistent around a cuboctahedron. We show that this system reduces to $A_2^{(1)\ast}$-type discrete Painlev\'e equations by considering a periodic reduction of a three-dimensional lattice constructed from overlapping cuboctahedra.
\end{abstract}

\subjclass[2010]{
14H70, 
33E30, 
34M55,
37K10, 
39A14, 
39A23, 
39A45 
}
\keywords{
Consistency around a cuboctahedron;
Consistency around an octahedron;
quad-equation;
Consistency around a cube;
ABS equation;
Discrete Painlev\'e equation
}

\maketitle
\allowdisplaybreaks
\section{Introduction}\label{Introduction}
In this paper, we consider a system of partial difference equations governing a function $u=u(\bml)$ taking values on the vertices of a face-centered cubic lattice $\Omega$, given by
\begin{equation}\label{eqn:def_Omega}
 \Omega=\set{\bml=\sum_{i=1}^3l_i\ep_i}{l_i\in\bbZ,~l_1+l_2+l_3\in2\bbZ},
\end{equation}
where $\{\ep_1,\ep_2,\ep_3\}$ is a standard basis of $\bbR^3$. The system consists of 6 equations:
\begin{equation}\label{eqn:PDEs_P123456}
 \dfrac{u_{\ol{ik}}}{u_{\ul{i}\ol{k}}}
 =\dfrac{(\al_{ij}+\ga_i)u_{\ol{jk}}-(\al_{ij}+\ga_j-\ga_k)u_{\ul{j}\ol{k}}}
 {(\al_{ij}-\ga_j+\ga_k)u_{\ol{jk}}-(\al_{ij}-\ga_i)u_{\ul{j}\ol{k}}},\quad
 \dfrac{u_{\ul{jk}}}{u_{\ol{jk}}}
 =\dfrac{(\al_{ij}+\ga_i)u_{\ul{ik}}-(\al_{ij}-\ga_j+\ga_k)u_{\ol{ik}}}
 {(\al_{ij}+\ga_j-\ga_k)u_{\ul{ik}}-(\al_{ij}-\ga_i)u_{\ol{ik}}},
\end{equation}
where $(i,j,k)=(1,2,3),\,(2,3,1),\,(3,1,2)$, and the bars $\bar{i}$ and $\ul{j}$ denote $\bml\to\bml+\ep_i$ and $\bml\to\bml-\ep_j$ respectively and the coefficients are given by
\begin{subequations}
\begin{align}
 &\al_{ij}=\al_i(l_i)-\al_j(l_j),
 &&\al_i(k)=\al_i(0)+k,
 &&i,j\in\{1,2,3\},~k\in\mathbb{Z},\\
 &\ga_1=-c+(-1)^{l_1+l_2}\de_1,
 &&\ga_2=-c+(-1)^{l_2+l_3}\de_2,
 &&\ga_3=-c+(-1)^{l_1+l_3}\de_3,
\end{align}
\end{subequations}
with $\al_i(0)$, $i=1,2,3$, $c$, and $\de_j$, $j=1,2,3$, being complex parameters. Figure \ref{fig:fcc} shows a unit cell in $\Omega$.
\begin{figure}[H]
  \centering
  \includegraphics[width=0.4\textwidth]{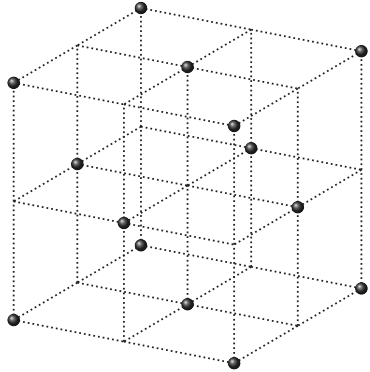}
  \caption{A unit cell of the $\Omega$ lattice.}
\label{fig:fcc}
  \end{figure}

Our study is motivated by two considerations. Firstly, the system \eqref{eqn:PDEs_P123456} satisfies the consistency around a cuboctahedron (CACO) property \cite{JN2019:arxiv1906.06650},
which is a generalization of the famous consistency around a cube (CAC) property\cite{NW2001:MR1869690}.
(See Appendix \ref{section:CACO_property} for a summary of the details of the CACO property and \S \ref{subsection:Background} for those of the CAC property.) Secondly, we are motivated by finding relations between partial difference equations and ordinary difference equations known as the discrete Painlev\'e equations. 

In this paper, we show that the system \eqref{eqn:PDEs_P123456} reduces to discrete Painlev\'e equations with initial value space characterised as $A_2^{(1)\ast}$ in the sense of Sakai\cite{SakaiH2001:MR1882403}. The latter equations have two forms in the literature given respectively by Tsuda \cite{TsudaT2008:MR2425662} and Ramani {\it et al.} \cite{RGH1991:MR1125951} and are explicitly given by:
\begin{subequations}\label{eqns:dP_E6_intro}
\begin{align}
 &\begin{cases}\label{eqn:asymmetric_dP_E6_intro}
 \Big(\underline{Y}+X\Big)\Big(X+Y\Big)
 =\dfrac{\left((X+c_3)^2-c_1\right)\left((X-c_3)^2-c_2\right)}{(X+t)^2-c_4},\\
 \Big(\overline{X}+Y)\Big(X+Y\Big)
 =\dfrac{\left((Y-c_3)^2-c_1\right)\left((Y+c_3)^2-c_2\right)}{\left(Y+t+\frac{1}{2}\right)^2-c_5},
 \end{cases}\\
 &\Big(\overline{X}+X\Big)\Big(\underline{X}+X\Big)=\dfrac{\left(X^2-c_1\right)\left(X^2-c_2\right)}{(X+t)^2-c_3}.
 \label{eqn:dP4_E6_intro}
\end{align}
\end{subequations}
Here, $t\in\bbC$ is an independent variable, $c_i$, $i=1\dots,5$, are complex parameters and
$X,Y$ are dependent variables:
\begin{equation}
 X=X(t),\quad
 Y=Y(t),\quad
 \overline{X}=X(t+1),\quad
 \underline{X}=X(t-1),\quad
 \underline{Y}=Y(t-1).
\end{equation}
We note that discrete Painlev\'e equations admit special solutions when parameters take special values.
For example, Equation \eqref{eqn:asymmetric_dP_E6_intro} has the special solution given by the generalized hypergeometric series ${}_3F_2$ when $4c_3+2\sqrt{c_4}+2\sqrt{c_5}=1$\cite{KajiwaraK2008:RIAMReports}.

Our main result is Theorem \ref{theorem:reduction}. To state the theorem, we first explain how to take the reduction on the lattice $\Omega$. To be explicit, consider a vertex $\bml\in \Omega$, given by $l_1\epsilon_1+l_2\epsilon_2+l_3\epsilon_3$. 
Define the plane $H_k\subset \Omega$ given by $l_3=k$.
We project the vertices of $H_1$ to the adjacent horizontal plane $H_0$ by taking 
$(l_1,l_2,1) \mapsto (l_1-1,l_2-1,0)$.
The union of the projection with the lattice points on $H_0$ forms $\mathbb Z^2$.
 We can define such a projection from every plane $H_k$ to $H_0$ by the following:
\[
(l_1,l_2,k) \mapsto (l_1-k,l_2-k,0).
\]
We call the result of this operation a {\it $(1,1,1)$-periodic reduction}. 

\begin{theorem}\label{theorem:reduction}
The $A_2^{(1)\ast}$-type discrete Painlev\'e equations \eqref{eqns:dP_E6_intro} can be obtained from 
the system of {\PDE}s \eqref{eqn:PDEs_P123456} via the $(1,1,1)$-periodic reduction.
\end{theorem}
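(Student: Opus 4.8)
The plan is to implement the $(1,1,1)$-periodic reduction concretely and push the system \eqref{eqn:PDEs_P123456} through it, tracking how the lattice variables, the linear coefficients $\al_{ij}$, and the sign factors $\ga_i$ collapse. First I would introduce the reduced coordinates: on $\Omega$ the projection $(l_1,l_2,l_3)\mapsto(l_1-l_3,l_2-l_3,0)$ means that a function invariant under the $(1,1,1)$-shift descends to a function of two integers $m=l_1-l_3$, $n=l_2-l_3$ (with the parity constraint $l_1+l_2+l_3\in2\bbZ$ becoming a single parity condition on $m+n$). I would then set $t$ to be the natural independent variable built from the surviving combination of the $\al_i(0)$'s — e.g. $t=\tfrac12(\al_1(0)+\al_2(0)-2\al_3(0))$ up to shifts — so that $\al_{ij}=\al_i(l_i)-\al_j(l_j)$ becomes an affine function of $t$ alone along the reduced lattice, while the differences that depend on $l_i-l_j$ become constants absorbed into the parameters $c_i$. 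The sign factors $\ga_1,\ga_2,\ga_3$, which depend on $(-1)^{l_i+l_j}$, become fixed signs on each sublattice determined by the parity of $m+n$; the $\de_j$ and $c$ then reorganize into the remaining constants $c_i$, and the alternating nature of the $\ga$'s is exactly what produces the half-integer shift $t+\tfrac12$ visible in \eqref{eqn:asymmetric_dP_E6_intro}.

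Next I would identify the dependent variables. The face-centred lattice $\Omega$ after reduction carries two interleaved $\bbZ^2$-sublattices (by parity), and I expect $X$ and $Y$ to be the restrictions of $u$ to these two sublattices, evaluated along a one-dimensional staircase path in the reduced lattice parametrized by $t$. Concretely I would pick a path that alternately increments in the $\ep_1$- and $\ep_2$-directions (a standard "corner transfer" path), so that $\ol X=X(t+1)$ corresponds to advancing two steps along the staircase, $\underline Y=Y(t-1)$ to retreating, etc. Substituting $u_{\ol{ik}},u_{\ul i\ol k},\dots$ in \eqref{eqn:PDEs_P123456} in terms of $X(t),Y(t),\ol X,\underline Y$ and simplifying the resulting rational relations should yield the coupled system \eqref{eqn:asymmetric_dP_E6_intro}. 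The symmetric form \eqref{eqn:dP4_E6_intro} I would then obtain either by a further symmetry reduction $X=Y$ (consistent with a diagonal choice of parameters) or by eliminating $Y$ from \eqref{eqn:asymmetric_dP_E6_intro}; I would check that the resulting single equation matches the $\polyD$-type form of Ramani {\it et al.} after a Möbius change of variable, and verify that the initial-value space is $A_2^{(1)\ast}$ by exhibiting the eight base points / the surface type in Sakai's classification, or at least by matching to the known normal forms cited from \cite{TsudaT2008:MR2425662,RGH1991:MR1125951}.

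The main obstacle I anticipate is bookkeeping of the sign factors and the correct choice of staircase path so that the reduction closes: because $\ga_i$ alternates with $(-1)^{l_i+l_j}$, moving along the lattice flips signs in a pattern that must be compatible with the periodicity $l_i\mapsto l_i+1$ simultaneously in all three directions; getting the path and the parity sublattice assignment right is what makes the reduced equation autonomous in $t$ (rather than $2$-periodic in disguise) and is what fixes the appearance of $t+\tfrac12$ rather than $t+1$. A secondary technical point is that \eqref{eqn:PDEs_P123456} is a system of six equations, more than the number of edges of a single quad; I would need to argue (presumably using the CACO consistency from \cite{JN2019:arxiv1906.06650}, Appendix \ref{section:CACO_property}) that on the periodically reduced lattice the six equations are not overdetermined but collapse to exactly the two relations of \eqref{eqn:asymmetric_dP_E6_intro}, with the remaining equations following as consequences. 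Once consistency guarantees no contradiction, the rest is a (lengthy but mechanical) rational-function simplification, together with a final parameter dictionary matching $(\al_i(0),c,\de_j)$ to $(t,c_1,\dots,c_5)$.
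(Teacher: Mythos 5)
Your first step agrees with the paper: Lemma \ref{lemma:periodic_reduction} imposes $u(\bml+\ep_1+\ep_2+\ep_3)=u(\bml)$ and obtains the reduced system \eqref{eqn:PDES_reduction} on the triangle lattice, and your worry about overdeterminacy is resolved more simply than you expect --- the six equations pair up under the periodicity into three relations of which only two are independent, with no appeal to CACO needed at that stage. After that, however, your route diverges from the paper's and contains a step that would fail and a step whose difficulty you underestimate. The paper does \emph{not} obtain \eqref{eqns:dP_E6_intro} by restricting $u$ to a staircase and then cleaning up: it identifies the reduced system \eqref{eqn:PDES_reduction} with the lattice \eqref{eqn:u_type123} generated by the translations $T_1,T_2,T_3$ of $\tW(A_2^{(1)})\subset\tW(E_6^{(1)})$ acting on ratios $y_i$ of Tsuda's $\tau$-variables (Lemma \ref{lemma:correspondence_tau_CACO}), and then reads off the discrete Painlev\'e equations from the known birational representation. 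Crucially, the Painlev\'e variables are \emph{not} restrictions of $u$ to parity sublattices: $X=f$ and $Y=g$ are M\"obius-type combinations of several lattice values, e.g.\ $f=\tfrac{(c-d_{12}+d_{23}-d_{13})y_1}{2(y_6-y_1)}+\tfrac{b_2+c-d_{12}+d_{23}-d_{13}}{4}$, and the time step is the \emph{double} translation ${T_1}^2$. Finding this change of variables is the substantive content of \S\ref{subsection:E6_ODEs}; calling it ``lengthy but mechanical rational-function simplification'' leaves the actual gap in your argument unfilled, since without the correct $f,g$ the staircase reduction produces a multiplicative three-point system that does not visibly match the additive form \eqref{eqn:asymmetric_dP_E6_intro}.

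The concrete failure is your plan for the symmetric equation \eqref{eqn:dP4_E6_intro}. Neither setting $X=Y$ nor eliminating $Y$ from \eqref{eqn:asymmetric_dP_E6_intro} works: the two halves of \eqref{eqn:asymmetric_dP_E6_intro} carry different shifts ($t$ versus $t+\tfrac12$) and different parameters ($c_4$ versus $c_5$), so the diagonal $X=Y$ is not consistent for generic parameters, and elimination of $Y$ yields a higher-degree equation, not \eqref{eqn:dP4_E6_intro}. The paper instead uses a \emph{projective reduction} \cite{KNT2011:MR2773334}: the half-translation $T_1$ is not a translation on the full parameter space, but becomes one on the subspace $d_{12}=d_{13}=0$, and only under this specialization does its action on $2f$ close into the single equation \eqref{eqn:dP4_E6_intro}. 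Your phrase ``consistent with a diagonal choice of parameters'' gestures at this, but as written the step is not justified and the generic-parameter version of your claim is false. Finally, a small correction of emphasis: the paper never verifies the surface type by computing base points; the $A_2^{(1)\ast}$ label is inherited from the identification with the $\tW(E_6^{(1)})$ $\tau$-function theory of \cite{TsudaT2008:MR2425662}, which is exactly the structural ingredient your direct computation bypasses and would have to reconstruct by hand.
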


\subsection{Notation and Definitions}\label{subsection:notation_definitions}
Throughout the paper, we use terminology to describe polynomials and quad-equations that is common in the literature. 
Readers who are unfamiliar with this notation may wish to consult \cite{JN2019:arxiv1906.06650,ABS2003:MR1962121,HJN2016:MR3587455}. 
We use $Q=Q(x,y,z,w)$ to denote a multivariable polynomial over $\bbC$. 
Under certain conditions, i.e., that $Q$ be affine linear and irreducible, we will refer to the equation $Q=0$ as a {\it quad-equation} or sometimes, for succinctness, refer to the polynomial $Q$ as a quad-equation. 
We remind the reader that the condition of irreducibility implies that $Q(x,y,z,w)=0$ can be solved for each argument, and that the solution is a rational function of the other three arguments.
\subsection{Background}
\label{subsection:Background}
Integrable systems are widely applicable models of science, occurring in fluid dynamics, particle physics and optics. 
The prototypical example is the famous Korteweg-de Vries (KdV) equation whose solitary wave-like solutions interact elastically like particles, leading to the invention of the term \textit{soliton}. 
It is then natural to ask what discrete versions of such equations are also integrable. 
This question turns out to be related to consistency conditions for polynomials associated to faces of cubes as we explain below.

Integrable discrete systems were discovered \cite{NQC1983:MR719638,NCWQ1984:MR763123,QNCL1984:MR761644,nimmo1998integrable} from mappings that turn out to be consistent on multi-dimensional cubes. (We note that there are additional systems that do not fall into this class, see e.g., \cite[Chapter 3]{HJN2016:MR3587455}.) 
These are quad-equations in the sense in \S \ref{subsection:notation_definitions}. 
In \cite{ABS2003:MR1962121,ABS2009:MR2503862,BollR2011:MR2846098,BollR2012:MR3010833}, 
Adler-Bobenko-Suris {\it et al.} classified quad-equations satisfying 
the consistency around a cube (CAC) property, which lead integrable {\PDE}s.
We refer to such \PDE s as ABS equations. 
It turns out that ABS equations contain many well known integrable \PDE s \cite{NC1995:MR1329559,NCWQ1984:MR763123,NQC1983:MR719638,HirotaR1977:MR0460934}.

Reductions of integrable PDEs lead to Painlev\'e equations, which first arose in the search for new transcendental functions in the early 1900's\cite{PainleveP1902:MR1554937,GambierB1910:MR1555055,FuchsR1905:quelques}. 
Again a natural question is to ask whether discrete versions exist with analogous properties. 
This question led to the discovery of second-order difference equations called the discrete Painlev\'e equations\cite{GR2004:MR2087743,KNY2017:MR3609039,QNCL1984:MR761644}).

It is now well-known that discrete Painlev\'e equations have initial value spaces with geometric structures that can be identified with root systems and affine Weyl groups \cite{SakaiH2001:MR1882403}. 
Sakai showed that there are 22 types of initial value spaces as shown in Table \ref{tab:sakai}.

\begin{table}[H]
\begin{center}
\caption{Types of spaces of initial values.}~\\
\label{tab:sakai}
\begin{tabular}{|l|l|}
\hline
Discrete type&Type of space of initial values\\
\hline
Elliptic&$A_0^{(1)}$\rule[-.5em]{0em}{1.6em}\\
\hline
Multiplicative&$A_0^{(1)\ast}$, $A_1^{(1)}$, $A_2^{(1)}$, $A_3^{(1)}$, \dots, $A_8^{(1)}$, $A_7^{(1)'}$\rule[-.5em]{0em}{1.6em}\\
\hline
Additive&$A_0^{(1)\ast\ast}$, $A_1^{(1)\ast}$, $A_2^{(1)\ast}$, $D_4^{(1)}$, \dots, $D_8^{(1)}$, $E_6^{(1)}$, $E_7^{(1)}$, $E_8^{(1)}$\rule[-.5em]{0em}{1.6em}\\
\hline
\end{tabular}
\end{center}
\end{table}

\subsection{Outline of the paper}
This paper is organized as follows.
In \S \ref{section:Weyl_dPE6}, we show the extended affine Weyl group of type $E_6^{(1)}$ and its subgroup which forms that of type $A_2^{(1)}$.
Moreover, from those birational actions we obtain the discrete Painlev\'e equations \eqref{eqns:dP_E6_intro} and the {\PDE}s \eqref{eqn:u_type123}, which are periodically reduced equations of the system \eqref{eqn:PDEs_P123456}.
In \S \ref{section:proof_theorem}, using the results in \S \ref{section:Weyl_dPE6} we give the proof of Theorem \ref{theorem:reduction}.
Finally, we give some concluding remarks in \S \ref{ConcludingRemarks}.
\section{Derivation of the discrete integrable systems from an extended affine Weyl group of type $E_6^{(1)}$}\label{section:Weyl_dPE6}
In this section, we derive the partial/ordinary discrete integrable systems from the birational actions of an extended affine Weyl group of type $E_6^{(1)}$, denoted by $\tW(E_6^{(1)})$.
Note that details of $\tW(E_6^{(1)})$ are given in Appendix \ref{section:Weyl_E6}.
\subsection{Extended affine Weyl group of type $A_2^{(1)}$}
Let $a_i$, $i=0,\dots,6$, be parameters satisfying the condition
\begin{equation}\label{eqn:para_a_E6}
 a_1+2a_2+3a_3+2a_4+a_5+2a_6+a_0=1,
\end{equation}
and $\tau^{(i)}_j$, $i=1,2,3$, $j=0,1,2,3$, be variables.
Moreover, we define the transformations $s_i$, $i=0,\dots,6$, $\iota_j$, $j=1,2,3$, by isomorphisms 
from the field of rational functions $K(\{\tau^{(i)}_j\})$, where $K=\mathbb{C}(\{a_i\})$, to itself.
These transformations collectively form the extended affine Weyl group of type $E_6^{(1)}$, denoted by $\tW(E_6^{(1)})$:
\begin{equation}\label{eqn:WE6}
 \tW(E_6^{(1)})=\langle s_0,\dots,s_6\rangle\rtimes\langle\iota_1,\iota_2,\iota_3\rangle.
\end{equation}
See Appendix \ref{section:Weyl_E6} for more details.

Let us define the transformations $w_i$, $i=0,1,2$, and $\pi$ by
\begin{equation}
 w_0=s_2s_1s_3s_2,\quad
 w_1=s_4s_5s_3s_4,\quad
 w_2=s_6s_0s_3s_6,\quad
 \pi=\iota_3\iota_1.
\end{equation}
They collectively form the extended affine Weyl group of type $A_2^{(1)}$:
\begin{equation}
 \tW(A_2^{(1)})=\langle w_0,w_1,w_2\rangle\rtimes\langle\pi\rangle.
\end{equation}
Indeed, the following fundamental relations hold:
\begin{equation}
 (w_iw_j)^{a_{ij}}=1,\quad i,j\in\{0,1,2\},\quad
 \pi^3=1,\quad
 \pi w_{\{0,1,2\}}=w_{\{1,2,0\}}\pi,
\end{equation}
where 
\begin{equation}
 (a_{ij})_{i,j=0}^2
 =\begin{pmatrix}
 2&3&3\\[-0.8em]
 3&2&3\\[-0.8em]
 3&3&2
 \end{pmatrix}.
\end{equation}
Introduce the parameters and variables that go well with $\tW(A_2^{(1)})$ as follows.
Let
\begin{subequations}
\begin{align}
 &b_0=a_1+2a_2+a_3,\quad
 b_1=a_3+2a_4+a_5,\quad
 b_2=a_3+2a_6+a_0,\\
 &c=\dfrac{a_0+a_1+a_3+a_5}{2},\quad
 d_{12}=\dfrac{a_0+a_1-a_3-a_5}{2},\quad
 d_{23}=\dfrac{a_0-a_1+a_3-a_5}{2},\\
 &d_{13}=\dfrac{a_0-a_1-a_3+a_5}{2},
\end{align}
\end{subequations}
where $b_0+b_1+b_2=1$, and
\begin{equation}
 y_1=\frac{\tau^{(1)}_1}{\tau^{(1)}_0},\quad
 y_2=\frac{\tau^{(3)}_3}{\tau^{(3)}_2},\quad
 y_3=\frac{\tau^{(2)}_1}{\tau^{(2)}_0},\quad
 y_4=\frac{\tau^{(2)}_3}{\tau^{(2)}_2},\quad
 y_5=\frac{\tau^{(1)}_3}{\tau^{(1)}_2},\quad
 y_6=\frac{\tau^{(3)}_1}{\tau^{(3)}_0}.
\end{equation}
Then, the action of $\tW(A_2^{(1)})$ on the parameters $b_0$, $b_1$, $b_2$, $c$, $d_{12}$, $d_{23}$, $d_{13}$ are given by
\begin{subequations}\label{eqns:WA2_paras}
\begin{align}
 &w_i(b_j)
 =\begin{cases}
 -b_i&\text{if }~ i=j,\\
 b_j+b_i&\text{if }~ i\neq j,
 \end{cases}\qquad
 w_0:(d_{12},d_{23})\mapsto(d_{23},d_{12}),\\
 &w_1:(d_{23},d_{13})\mapsto(d_{13},d_{23}),\qquad
 w_2:(d_{12},d_{13})\mapsto(-d_{13},-d_{12}),\\
 &\pi:(b_0,b_1,b_2,d_{12},d_{23},d_{13})\mapsto(b_1,b_2,b_0,-d_{23},-d_{13},d_{12}),
\end{align}
\end{subequations}
where $i,j\in\bbZ/(3\bbZ)$, while those on the $y$-variables $y_i$, $i=1,\dots,6$, are given by
\begin{subequations}\label{eqns:WA2_ys}
\begin{align}
 &w_0:
 \begin{pmatrix}
 y_1,~y_3\\
 y_5,~y_6
 \end{pmatrix}
 \mapsto
 \left(\begin{matrix}
 y_5,~\dfrac{(b_0-c+d_{13})y_3-(b_0-d_{12}+d_{23})y_1}{(b_0+d_{12}-d_{23})y_3-(b_0+c-d_{13})y_1}y_5\\[1.5em]
 y_1,~\dfrac{(b_0-c+d_{13})y_6-(b_0-d_{12}+d_{23})y_1}{(b_0+d_{12}-d_{23})y_6-(b_0+c-d_{13})y_1}y_5
 \end{matrix}\right),\\
 &w_1:
 \begin{pmatrix}
 y_1,~y_3\\
 y_4,~y_6
 \end{pmatrix}
 \mapsto
 \left(\begin{matrix}
 \dfrac{(b_1-c+d_{12})y_1-(b_1-d_{13}+d_{23})y_3}{(b_1+d_{13}-d_{23})y_1-(b_1+c-d_{12})y_3}y_4,~y_4\\[1.5em]
 y_3,~\dfrac{(b_1-c+d_{12})y_6-(b_1-d_{13}+d_{23})y_3}{(b_1+d_{13}-d_{23})y_6-(b_1+c-d_{12})y_3}y_4
 \end{matrix}\right),\\
 &w_2:
 \begin{pmatrix}
 y_1,~y_2\\
 y_3,~y_6
 \end{pmatrix}
 \mapsto
 \left(\begin{matrix}\dfrac{(b_2-c-d_{23})y_1-(b_2-d_{12}-d_{13})y_6}{(b_2+d_{12}+d_{13})y_1-(b_2+c+d_{23})y_6}y_2,~y_6\\[1.5em]
 \dfrac{(b_2-d_{12}-d_{13})y_6-(b_2-c-d_{23})y_3}{(b_2+c+d_{23})y_6-(b_2+d_{12}+d_{13})y_3}y_2,~y_2
 \end{matrix}\right),\\
 &\pi:(y_1,y_2,y_3,y_4,y_5,y_6)\mapsto
 (y_3,y_5,y_6,y_2,y_4,y_1).
\end{align}
\end{subequations}

\begin{remark}\label{remark:para_tau_action}
We follow the convention that the parameters and $y$-variables not explicitly included in the actions listed in Equations \eqref{eqns:WA2_paras} and \eqref{eqns:WA2_ys} are the ones that remain unchanged under the action of the corresponding transformation. That is, the transformation acts as an identity on those parameters or variables.
\end{remark}

For later convenience, we here define the translations in $\tW(A_2^{(1)})$ by
\begin{equation}
 T_1=w_1w_2\pi^2,\quad
 T_2=w_2w_0\pi^2,\quad
 T_3=w_0w_1\pi^2,
\end{equation}
whose actions on the parameters $b_0$, $b_1$, $b_2$, $c$, $d_{12}$, $d_{23}$, $d_{13}$ are given by
\begin{subequations}
\begin{align}
 T_1:&(b_0,b_1,d_{12},d_{13})\mapsto(b_0-1,b_1+1,-d_{12},-d_{13}),\\
 T_2:&(b_1,b_2,d_{12},d_{23})\mapsto(b_1-1,b_2+1,-d_{12},-d_{23}),\\
 T_3:&(b_2,b_0,d_{23},d_{13})\mapsto(b_2-1,b_0+1,-d_{23},-d_{13}).
\end{align}
\end{subequations}
Note that $T_1T_2T_3=1$ and $T_iT_j=T_jT_i$, where $i,j=1,2,3$, hold.
\subsection{Derivation of the partial difference equations from $\tW(A_2^{(1)})$}\label{subsection:E6_PDEs}
In this subsection, we derive the {\PDE}s \eqref{eqn:u_type123} from the birational action of $\tW(A_2^{(1)})$.

Let
\begin{equation}
 u_{l_1,l_2,l_3}={T_1}^{l_1}{T_2}^{l_2}{T_3}^{l_3}(y_2).
\end{equation}
Note that
\begin{equation}
 u_{0,1,1}=y_1,\quad
 u_{0,0,0}=y_2,\quad
 u_{1,0,0}=y_3,\quad
 u_{0,1,0}=y_4,\quad
 u_{1,1,0}=y_5,\quad
 u_{1,2,0}=y_6.
\end{equation}
We assign the variable $u_{l_1,l_2,l_3}$ on the vertices $(l_1,l_2,l_3)$ of the triangle lattice 
\begin{equation}
 \bbZ^3/(1,1,1):=\set{(l_1,l_2,l_3)\in\bbZ^3}{l_1+l_2+l_3=0}.
\end{equation}
Then, we obtain the following lemma.
\begin{lemma}
On the triangle lattice there are three fundamental relations (essentially two):
\begin{equation}\label{eqn:u_type123}
 \dfrac{u_{\ol{i}}}{u_{\ul{i}}}
 =\dfrac{\Big(b^{(i)}_{l_i,l_j}-c+(-1)^{l_i+l_j}d_{ij}\Big)u_{\ol{j}}-\Big(b^{(i)}_{l_i,l_j}-(-1)^{l_j+l_k}d_{jk}+(-1)^{l_i+l_k}d_{ik}\Big)u_{\ul{j}}}{\Big(b^{(i)}_{l_i,l_j}+(-1)^{l_j+l_k}d_{jk}-(-1)^{l_i+l_k}d_{ik}\Big)u_{\ol{j}}-\Big(b^{(i)}_{l_i,l_j}+c-(-1)^{l_i+l_j}d_{ij}\Big)u_{\ul{j}}},
\end{equation}
where $(i,j,k)=(1,2,3),\,(2,3,1),\,(3,1,2)$ and
\begin{equation}
 b^{(1)}_{l_1,l_2}=b_1+l_1-l_2,\quad
 b^{(2)}_{l_2,l_3}=b_2+l_2-l_3-1,\quad
 b^{(0)}_{l_1,l_3}=b_0+l_3-l_1.
\end{equation}
Here, $u=u_{l_1,l_2,l_3}$ and
the subscript \,$\bi$ (or, $\ul{i}$\,) for a function $u=u_{l_1,l_2,l_3}$
means \,$+1$ shift (or, $-1$ shift) in the $l_i$-direction.
\end{lemma}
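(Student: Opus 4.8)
The plan is to derive Equation \eqref{eqn:u_type123} directly from the birational action of $\tW(A_2^{(1)})$ recorded in \eqref{eqns:WA2_ys}, by specializing the abstract translations $T_i$ to the concrete lattice function $u_{l_1,l_2,l_3}={T_1}^{l_1}{T_2}^{l_2}{T_3}^{l_3}(y_2)$. First I would observe that, since $T_1T_2T_3=1$ and the $T_i$ commute, the function $u$ is well defined on the triangle lattice $\bbZ^3/(1,1,1)$, and the six $y$-variables are exactly the values of $u$ at the six lattice points listed immediately before the lemma (the fundamental hexagon around the origin). So it suffices to prove the three relations at the origin, $(l_1,l_2,l_3)=(0,0,0)$, and then translate by ${T_1}^{l_1}{T_2}^{l_2}{T_3}^{l_3}$ to obtain the stated relations at a general vertex; the shift subscripts $\ol{i}$, $\ul{j}$ correspond precisely to applying $T_i$, $T_j^{-1}$, and the sign factors $(-1)^{l_i+l_j}$ and the shifted quantities $b^{(i)}_{l_i,l_j}$ are just the images of $d_{ij}$ and $b_i$ under that translation, as can be read off from the action of $T_i$ on the parameters.

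The key computational step is the base case. Take $(i,j,k)=(1,2,3)$: here $u_{\ol{1}}=u_{1,0,0}=T_1(y_2)$, $u_{\ul{1}}=u_{-1,0,0}=T_1^{-1}(y_2)$, $u_{\ol{2}}=u_{0,1,0}=T_2(y_2)$, $u_{\ul{2}}=u_{0,-1,0}=T_2^{-1}(y_2)$. Using $T_1=w_1w_2\pi^2$ and $T_2=w_2w_0\pi^2$ and the explicit formulas \eqref{eqns:WA2_ys}, together with $\pi:(y_1,\dots,y_6)\mapsto(y_3,y_5,y_6,y_2,y_4,y_1)$, I would compute each of these four images as an explicit rational function of $y_1,\dots,y_6$ (equivalently of the $u$-values at the fundamental hexagon), and likewise express the coefficients $b_1\pm c\mp d_{12}$, $b_1\pm(-d_{23})\mp d_{13}$, etc., in terms of $b_i$, $c$, $d_{ij}$; evaluating $b^{(1)}_{0,0}=b_1$ and checking that the signs $(-1)^{0}$ reduce \eqref{eqn:u_type123} to the claimed identity among these rational functions. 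The relation at $(0,0,0)$ for $(i,j,k)=(1,2,3)$ should in fact be nothing but a rewriting of the $w_0$-action on $y$-variables (the first component of the $w_0$-formula, which already has the shape $\big((b_0-c+d_{13})y_3-(b_0-d_{12}+d_{23})y_1\big)/\big((b_0+d_{12}-d_{23})y_3-(b_0+c-d_{13})y_1\big)$), after conjugating by an appropriate element so that the indices match; the cases $(2,3,1)$ and $(3,1,2)$ follow by applying $\pi$ (which cyclically permutes everything) — this is the meaning of ``essentially two'' in the statement. I would then invoke Remark \ref{remark:para_tau_action} to handle the variables left invariant.

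The main obstacle is bookkeeping: keeping straight the many index conventions — the correspondence between $(w_0,w_1,w_2,\pi)$ and $(T_1,T_2,T_3)$, the labelling $b_0=b^{(0)}$ versus $b^{(1)},b^{(2)}$ with the asymmetric $-1$ shift in $b^{(2)}_{l_2,l_3}=b_2+l_2-l_3-1$, and the parity factors $(-1)^{l_i+l_j}$ that flip under each $T_m$ because $T_m$ acts on $d_{ij}$ by a sign. I would manage this by first verifying the parameter side: check that ${T_1}^{l_1}{T_2}^{l_2}{T_3}^{l_3}$ sends $b_1\mapsto b^{(1)}_{l_1,l_2}=b_1+l_1-l_2$, $c\mapsto c$ (since no $T_i$ moves $c$), and $d_{12}\mapsto(-1)^{l_1+l_2}d_{12}$ (because $T_1$ and $T_2$ each negate $d_{12}$ while $T_3$ fixes it, and $l_1+l_2$ counts the net parity), and similarly for $d_{23},d_{13}$ with $b^{(2)},b^{(0)}$; this reduces the general statement to the base case. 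The remaining risk is a sign or normalization slip in matching \eqref{eqn:u_type123} to the $w_i$-formulas, which I would catch by a consistency cross-check, e.g. verifying that the three relations are compatible (the product of the three around a triangle is trivial) and that they are preserved under $\pi$.
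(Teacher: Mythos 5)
Your proposal is correct and follows essentially the same route as the paper: both derive the quad-equations from the explicit action of the translations $T_i$ on the $y$-variables and then extend them over the whole lattice (the paper by asserting that the six initial variables determine all $u_{l_1,l_2,l_3}$, you by conjugating with ${T_1}^{l_1}{T_2}^{l_2}{T_3}^{l_3}$ and tracking how $b_i$, $c$, $d_{ij}$ transform, which makes that step explicit). The only cosmetic difference is the choice of base points: the paper verifies the three instances via $T_1(y_5)/y_4$, $T_2(y_4)/y_2$ and $T_3(y_2)/y_5$, for which all four vertices of each quad already lie among the six initial variables, whereas your choice of the origin forces you to compute quantities such as $T_2^{-1}(y_2)$ explicitly --- more bookkeeping, but the same argument.
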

\begin{proof}
Equations \eqref{eqn:u_type123} with $(i,j,k)=(1,2,3),\,(2,3,1),\,(3,1,2)$ are respectively obtained from the following actions:
\begin{subequations}\label{eqns:y_type123}
\begin{align}
 &\dfrac{T_1(y_5)}{y_4}
 =\dfrac{(b_1-c+d_{12}) y_6-(b_1+d_{23}-d_{13}) y_3}{(b_1-d_{23}+d_{13}) y_6-(b_1+c-d_{12}) y_3},
 \label{eqn:y_type1}\\
 &\dfrac{T_2(y_4)}{y_2}
 =\dfrac{(b_2-c-d_{23}) y_1-(b_2-d_{13}-d_{12}) y_6}{(b_2+d_{13}+d_{12}) y_1-(b_2+c+d_{23}) y_6},
 \label{eqn:y_type2}\\
 &\dfrac{T_3(y_2)}{y_5}
 =\dfrac{(b_0-c+d_{13}) y_3-(b_0-d_{12}+d_{23}) y_1}{(b_0+d_{12}-d_{23}) y_3-(b_0+c-d_{13}) y_1}.
 \label{eqn:y_type3}
\end{align}
\end{subequations}
Moreover, we can easily verify that
using Equations \eqref{eqn:u_type123}
we can express any $u_{l_1,l_2,l_3}$ on the lattice by the six initial variables $y_i$, $i=1,\dots,6$,
and one of the equations \eqref{eqn:u_type123} can be obtained from the other two equations.
Therefore, we have completed the proof.
\end{proof}

\begin{remark}
Because of the following relations:
\begin{subequations}
\begin{align}
 &w_0(u_{l_1,l_2,l_3})=u_{l_3,l_2,l_1}\,,\quad
 w_1(u_{l_1,l_2,l_3})=u_{l_2,l_1,l_3}\,,\quad
 w_2(u_{l_1,l_2,l_3})=u_{l_1+1,l_3+2,l_2}\,,\\
 &\pi(u_{l_1,l_2,l_3})=u_{l_3+1,l_1+1,l_2}\,,
\end{align}
\end{subequations}
which follow from
\begin{subequations}
\begin{align}
 &w_0T_{\{1,2,3\}}=T_{\{3,2,1\}}w_0,\quad
 w_1T_{\{1,2,3\}}=T_{\{2,1,3\}}w_1,\quad
 w_2T_{\{1,2,3\}}=T_{\{1,3,2\}}w_2,\\
 &\pi T_{\{1,2,3\}}=T_{\{2,3,1\}}\pi,\quad
 w_0(u_{0,0,0})=u_{0,0,0},\quad
 w_1(u_{0,0,0})=u_{0,0,0},\\
 &w_2(u_{0,0,0})=u_{1,2,0},\quad
 \pi(u_{0,0,0})=u_{1,1,0},
\end{align}
\end{subequations}
the transformation group $\tW(A_2^{(1)})$ can be also regarded as the symmetry of the triangle lattice (see Figure \ref{fig:fundamental_initial}).
\end{remark}

\begin{figure}[h]
\begin{center}
\includegraphics[width=0.7\textwidth]{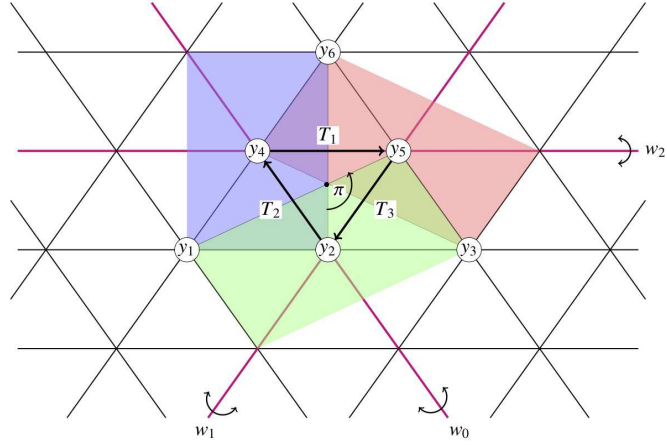}
\end{center}
\caption{
Triangle lattice. 
On the vertices the variables $u_{l_1,l_2,l_3}$ are assigned,
and on the quadrilaterals there exist quad-equations \eqref{eqn:u_type123}, e.g.
Equations \eqref{eqn:y_type1}, \eqref{eqn:y_type2} and \eqref{eqn:y_type3} are colored in red, blue and green, respectively.
}
\label{fig:fundamental_initial}
\end{figure}

\subsection{Derivation of the $A_2^{(1)\ast}$-type discrete Painlev\'e equations from $\tW(A_2^{(1)})$}\label{subsection:E6_ODEs}
In this subsection, we derive the $A_2^{(1)\ast}$-type discrete Painlev\'e equations \eqref{eqns:dP_E6_intro} from the birational action of $\tW(A_2^{(1)})$.

Let 
\begin{subequations}
\begin{align}
 &f=\dfrac{(c-d_{12}+d_{23}-d_{13})y_1}{2(y_6-y_1)}+\frac{b_2+c-d_{12}+d_{23}-d_{13}}{4},\\
 &g=\dfrac{(c-d_{12}+d_{23}-d_{13})y_3}{2(y_3-y_6)}-\frac{b_2+c-d_{12}+d_{23}-d_{13}}{4}.
\end{align}
\end{subequations}
Then, the action of $\tW(A_2^{(1)})$ on the variables $f$ and $g$ are given by
\begin{subequations}
\begin{align}
 &w_0(f)=f-\frac{b_0}{4},\quad
 w_1(g)=g+\frac{b_1}{4},\quad
 \pi(g)=f-\frac{b_2+b_0}{4},\\
 &\frac{4(c-d_{12}+d_{23}-d_{13})}{4 f-2 b_0-b_2-c-d_{12}+d_{23}+d_{13}}
 \left(w_0(g)+\frac{b_2+b_0+c+d_{12}-d_{23}-d_{13}}{4}\right)\notag\\
 &\hspace{1em}=\frac{(b_0-d_{12}+d_{23})(4g+b_2-c+d_{12}-d_{23}+d_{13})}{4f-b_2+c-d_{12}+d_{23}-d_{13}}\notag\\
 &\hspace{2.5em}-\frac{(b_0-c+d_{13})(4g+b_2+c-d_{12}+d_{23}-d_{13})}{4f-b_2-c+d_{12}-d_{23}+d_{13}},\\
 &\frac{4(c-d_{12}+d_{23}-d_{13})}{4g+2b_1+b_2+c-d_{12}-d_{23}+d_{13}}
 \left(w_1(f)-\frac{b_1+b_2+c-d_{12}-d_{23}+d_{13}}{4}\right)\notag\\
 &\hspace{1em}=\frac{(b_1+d_{23}-d_{13})(4f-b_2+c-d_{12}+d_{23}-d_{13})}{4g+b_2-c+d_{12}-d_{23}+d_{13}}\notag\\
 &\hspace{2.5em}-\frac{(b_1-c+d_{12})(4f-b_2-c+d_{12}-d_{23}+d_{13})}{4g+b_2+c-d_{12}+d_{23}-d_{13}},\\
 &\pi(f)=-\frac{(4f-b_2+c-d_{12}+d_{23}-d_{13})(4g+b_2+c-d_{12}+d_{23}-d_{13})}{16(f+g)}\notag\\
 &\hspace{3em}+\frac{b_0+c-d_{12}+d_{23}-d_{13}}{4}.
\end{align}
\end{subequations}
Using the transformation ${T_1}^2$ whose action on the parameter space $\{b_0,b_1,b_2,c,d_{12},d_{23},d_{13}\}$ is translational as ${T_1}^2:(b_0,b_1)\mapsto(b_0-2,b_1+2)$ shows,
we obtain the discrete Painlev\'e equation \eqref{eqn:asymmetric_dP_E6_intro}
with the following correspondence:
\begin{subequations}
\begin{align}
 &X=f,\quad
 Y=g,\quad
 \overline{X}={T_1}^2(f),\quad
 \underline{Y}={T_1}^{-2}(g),\quad
 t=\dfrac{2b_1+b_2-2}{4},\\
 &c_1=\dfrac{(b_2+c+d_{23})^2}{16},\quad
 c_2=\dfrac{(b_2-c-d_{23})^2}{16},\quad
 c_3=\dfrac{d_{12}+d_{13}}{4},\\
 &c_4=\dfrac{(c+d_{12}-d_{23}-d_{13})^2}{16},\quad
 c_5=\dfrac{(c-d_{12}-d_{23}+d_{13})^2}{16}.
\end{align}
\end{subequations}

We can also obtain the discrete Painlev\'e equations from non-translation on the parameter space as follows\cite{KNT2011:MR2773334}.
The action of $T_1$ on the parameter space:
\begin{equation*}
 T_1:(b_0,b_1,d_{12},d_{13})\mapsto(b_0-1,b_1+1,-d_{12},-d_{13}),
\end{equation*}
is not translational,
but when the parameters take the special values $d_{12}=d_{13}=0$,
it becomes translational motion on the parameter sub-space $\{b_0,b_1,b_2,c,d_{23}\}$: $T_1:(b_0,b_1)\mapsto(b_0-1,b_1+1)$.
Under the specialization of the parameters,
the action of $T_1$ gives the discrete Painlev\'e equation \eqref{eqn:dP4_E6_intro}
with the following correspondence:
\begin{subequations}
\begin{align}
 &X=2f,\quad
 \overline{X}=T_1(2f),\quad
 \underline{X}={T_1}^{-1}(2f),\quad
 t=\dfrac{2b_1+b_2-2}{2},\\
 &c_1=\dfrac{(b_2+c+d_{23})^2}{4},\quad
 c_2=\dfrac{(b_2-c-d_{23})^2}{4},\quad
 c_3=\dfrac{(c-d_{23})^2}{4}.
\end{align}
\end{subequations}

\section{Proof of Theorem \ref{theorem:reduction}}\label{section:proof_theorem}
In this section, we give the proof of Theorem \ref{theorem:reduction} via the reduction from the system of {\PDE}s \eqref{eqn:PDEs_P123456} to the system of {\PDE}s \eqref{eqn:u_type123}.

The following lemma holds.
\begin{lemma}\label{lemma:periodic_reduction}
By imposing the $(1,1,1)$-periodic condition: $u(\bml+\ep_1+\ep_2+\ep_3)=u(\bml)$
for $\bml\in\Omega$, 
the system \eqref{eqn:PDEs_P123456} can be reduced to the following system of {\PDE}s:
\begin{equation}\label{eqn:PDES_reduction}
 \dfrac{u_{\ol{i}}}{u_{\ul{i}}}
 =\dfrac{\Big(\al_{ij}-c+(-1)^{l_i+l_j}\de_i\Big)u_{\ol{j}}-\Big(\al_{ij}-(-1)^{l_j+l_k}\de_j+(-1)^{l_i+l_k}\de_k\Big)u_{\ul{j}}}
 {\Big(\al_{ij}+(-1)^{l_j+l_k}\de_j-(-1)^{l_i+l_k}\de_k\Big)u_{\ol{j}}-\Big(\al_{ij}+c-(-1)^{l_i+l_j}\de_i\Big)u_{\ul{j}}},
\end{equation}
where $(i,j,k)=(1,2,3),\,(2,3,1),\,(3,1,2)$, $u=u(\bml)$ and 
$\bml=\sum_{i=1}^3l_i\ep_i\in \bbZ^3/(\ep_1+\ep_2+\ep_3)$.
\end{lemma}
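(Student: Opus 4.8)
The plan is to impose the periodic condition, substitute the resulting shift-identities directly into the six equations \eqref{eqn:PDEs_P123456}, and verify that they collapse onto the three equations \eqref{eqn:PDES_reduction}; the only subtlety is a bookkeeping of signs.

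First I would record what the condition $u(\bml+\ep_1+\ep_2+\ep_3)=u(\bml)$ yields. Since $\ep_1+\ep_2+\ep_3=\ep_i+\ep_j+\ep_k$ for every permutation $(i,j,k)$ of $(1,2,3)$, it gives $u_{\ol{ik}}=u_{\ul j}$, $u_{\ul{ik}}=u_{\ol j}$, $u_{\ol{jk}}=u_{\ul i}$, $u_{\ul{jk}}=u_{\ol i}$, whereas a mixed two-step shift such as $u_{\ul{i}\ol{k}}$ is not a single shift at $\bml$ but is the single shift $\ul i$ taken at the point $\bml+\ep_k$. I would also record how the coefficients transform under $\bml\mapsto\bml+\ep_k$: since $\al_{ij}$ and $(-1)^{l_i+l_j}$ depend only on $l_i,l_j$, they are unchanged, while $(-1)^{l_j+l_k}$ and $(-1)^{l_i+l_k}$ each change sign; using $\ga_i+c=(-1)^{l_i+l_j}\de_i$, $\ga_j+c=(-1)^{l_j+l_k}\de_j$, $\ga_k+c=(-1)^{l_i+l_k}\de_k$, this says the shift by $\ep_k$ fixes $\ga_i$ while flipping the signs of the $\de_j$- and $\de_k$-parts of $\ga_j$ and $\ga_k$. (The coefficients of \eqref{eqn:PDES_reduction} are similarly invariant under $\bml\mapsto\bml+\ep_1+\ep_2+\ep_3$, so \eqref{eqn:PDES_reduction} is well defined on $\bbZ^3/(\ep_1+\ep_2+\ep_3)$.)

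Next I would treat, for each cyclic $(i,j,k)$, the two equations of \eqref{eqn:PDEs_P123456} in turn. In the second equation all four arguments $u_{\ul{jk}},u_{\ol{jk}},u_{\ul{ik}},u_{\ol{ik}}$ are pure two-step shifts, so the identities above turn them into $u_{\ol i},u_{\ul i},u_{\ol j},u_{\ul j}$ at $\bml$; substituting $(-1)^{l_m+l_n}\de_m=\ga_m+c$ into the coefficients $\al_{ij}\pm\ga_i$ and $\al_{ij}\pm(\ga_j-\ga_k)$ then renders it identical to \eqref{eqn:PDES_reduction} at $\bml$. In the first equation the four arguments $u_{\ol{ik}},u_{\ul{i}\ol{k}},u_{\ol{jk}},u_{\ul{j}\ol{k}}$ become, with base point $\bmv:=\bml+\ep_k$, exactly $u_{\ol i},u_{\ul i},u_{\ol j},u_{\ul j}$ at $\bmv$, and by the sign analysis of the previous paragraph the coefficients $\al_{ij}+\ga_i$, $\al_{ij}+\ga_j-\ga_k$, $\al_{ij}-\ga_j+\ga_k$, $\al_{ij}-\ga_i$ (evaluated at $\bml$) coincide with the coefficients of \eqref{eqn:PDES_reduction} written at $\bmv$. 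Hence each of the six equations becomes an instance of \eqref{eqn:PDES_reduction}; and since $\bml$ (from the second equation) and $\bml+\ep_k$ (from the first) both run over every vertex of $\bbZ^3/(\ep_1+\ep_2+\ep_3)$ as $\bml$ runs over $\Omega$, every instance of \eqref{eqn:PDES_reduction} is obtained and no others, so the reduced system is precisely \eqref{eqn:PDES_reduction}.

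The one step demanding care is this coefficient comparison for the first equation: one must check that the two sign flips produced by the $\ep_k$-rebasing are exactly absorbed because $\de_j$ and $\de_k$ enter the numerator and the denominator of \eqref{eqn:PDES_reduction} with opposite signs. This is a short verification rather than a genuine obstacle; everything else is substitution term by term, and the structural agreement between \eqref{eqn:PDES_reduction} and \eqref{eqn:u_type123} already indicates that the computation must close.
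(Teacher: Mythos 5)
Your proposal is correct and follows the same route as the paper, which simply states that applying the periodic condition to each of the six equations of \eqref{eqn:PDEs_P123456} yields the corresponding case of \eqref{eqn:PDES_reduction} without showing the computation. You supply the details the paper omits — the identification $u_{\ul{jk}}=u_{\ol i}$, $u_{\ol{ik}}=u_{\ul j}$, etc., the rebasing of the first equation at $\bml+\ep_k$, and the sign bookkeeping via $\ga_i+c=(-1)^{l_i+l_j}\de_i$ — and all of these check out.
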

\begin{proof}
Applying the $(1,1,1)$-periodic condition to the system \eqref{eqn:PDEs_P123456},
we obtain Equations \eqref{eqn:PDES_reduction} with $(i,j,k)=(1,2,3)$, $(2,3,1)$ and $(3,1,2)$
from Equations \eqref{eqn:PDEs_P123456} with $(i,j,k)=(1,2,3)$, $(2,3,1)$ and $(3,1,2)$, respectively.
Therefore, we have completed the proof.
\end{proof}

\begin{remark}
\quad\\[-1.5em]
\begin{description}
\item[(i)]
The number of essential equations in the system \eqref{eqn:PDES_reduction} is two.
\item[(ii)]
By the $(1,1,1)$-reduction,
each cuboctahedron is reduced to a hexagram {\rm(see Figure \ref{fig:reduction_cuboctahedron})},
which causes the reduction from the face-centred cubic lattice $\Omega$ to the triangle lattice $\bbZ^3/(\ep_1+\ep_2+\ep_3)$.
\end{description}
\end{remark}

\begin{figure}[htbp]
\begin{center}
\includegraphics[width=1\textwidth]{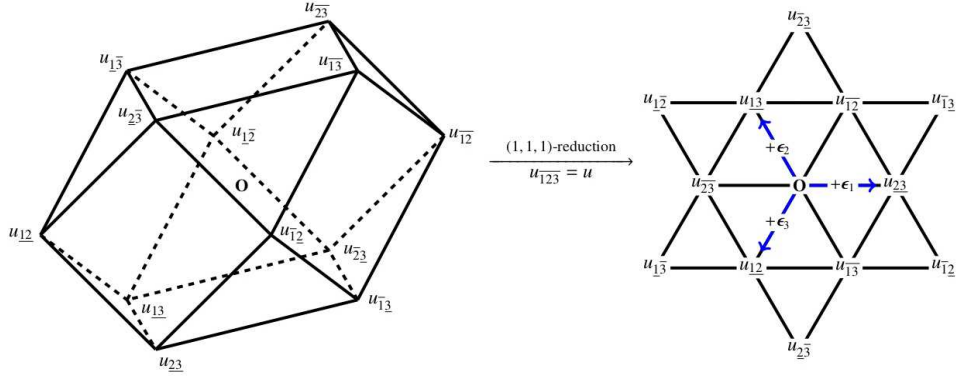}
\end{center}
\caption{The $(1,1,1)$-reduction of the cuboctahedron.}
\label{fig:reduction_cuboctahedron}
\end{figure}

\begin{lemma}\label{lemma:correspondence_tau_CACO}
The reduced system \eqref{eqn:PDES_reduction} is equivalent to equations the system \eqref{eqn:u_type123}.
\end{lemma}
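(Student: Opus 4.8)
The plan is to exhibit an explicit dictionary between the parameters and variables of the reduced CACO system \eqref{eqn:PDES_reduction} and those of the $\tW(A_2^{(1)})$-derived system \eqref{eqn:u_type123}, and then verify that under this dictionary the two families of quad-equations coincide termwise. First I would compare the two systems side by side: \eqref{eqn:PDES_reduction} is governed by $\al_{ij}=\al_i(l_i)-\al_j(l_j)$ with $\al_i(k)=\al_i(0)+k$, together with the sign-modulated constants $c$ and $\de_1,\de_2,\de_3$, while \eqref{eqn:u_type123} is governed by the affine-linear coefficients $b^{(i)}_{l_i,l_j}$ (namely $b_1+l_1-l_2$, $b_2+l_2-l_3-1$, $b_0+l_3-l_1$) together with $c$ and $d_{12},d_{23},d_{13}$. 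The structural match is already visible: both right-hand sides have the form $\big((\beta-c+(-1)^{\bullet}\delta)u_{\ol\jmath}-(\beta-(-1)^{\bullet}\delta'+(-1)^{\bullet}\delta'')u_{\ul\jmath}\big)/\big((\beta+(-1)^{\bullet}\delta'-(-1)^{\bullet}\delta'')u_{\ol\jmath}-(\beta+c-(-1)^{\bullet}\delta)u_{\ul\jmath}\big)$, so the task reduces to matching the linear forms $\beta$ and identifying the $\delta$'s, $c$'s and the parities.

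The key steps, in order, are: (1) Identify the parity exponents. In \eqref{eqn:PDES_reduction} the signs are $(-1)^{l_i+l_j}$ etc.\ with $\bml\in\bbZ^3/(\ep_1+\ep_2+\ep_3)$, and in \eqref{eqn:u_type123} they are the same expressions on $\bbZ^3/(1,1,1)$; these coincide once one fixes the identification of the two triangle lattices (this is where the geometric statement in the preceding Remark---that the $(1,1,1)$-reduction sends each cuboctahedron to a hexagram---gets used to pin down which vertex of one lattice corresponds to which of the other). (2) Match the affine-linear coefficients: set $\al_{12}=\al_1(l_1)-\al_2(l_2)$ equal to $b^{(1)}_{l_1,l_2}=b_1+l_1-l_2$, which forces $\al_1(0)-\al_2(0)=b_1$, and similarly $\al_2(0)-\al_3(0)=b_2-1$ (note the shift by $-1$) and $\al_3(0)-\al_1(0)=b_0$ via the $b^{(0)}$ relation; check these three are consistent with $b_0+b_1+b_2=1$, i.e.\ that the three linear forms sum correctly around the triangle. (3) Match the modulation parameters: set $\de_i=d_{ij}$ appropriately, i.e.\ $\de_1\leftrightarrow d_{12}$, $\de_2\leftrightarrow d_{23}$, $\de_3\leftrightarrow d_{13}$ (checking that the $jk$- and $ik$-labelled $d$'s appearing in \eqref{eqn:u_type123} are exactly $\de_j,\de_k$ under this same rule), and take $c$ to be $c$. (4) Substitute and confirm that each of the three equations of \eqref{eqn:PDES_reduction}, for $(i,j,k)=(1,2,3),(2,3,1),(3,1,2)$, becomes verbatim the corresponding equation of \eqref{eqn:u_type123}.

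The main obstacle I expect is step (1)--(2): getting the \emph{indexing conventions} to line up exactly---in particular accounting for the asymmetric $-1$ in $b^{(2)}_{l_2,l_3}=b_2+l_2-l_3-1$ and the non-cyclic labelling ($b^{(0)}$ rather than $b^{(3)}$ attached to the $(3,1,2)$ equation), and checking that the shift maps $\bml\to\bml\pm\ep_i$ on $\Omega/\langle\ep_1+\ep_2+\ep_3\rangle$ are carried to the shifts $\ol\imath,\ul\imath$ on $\bbZ^3/(1,1,1)$ without an unexpected permutation of the three directions. Once the correct identification of the two triangle lattices and of $(\al_i(0),c,\de_j)\leftrightarrow(b_i,c,d_{ij})$ is fixed, the verification in step (4) is a direct term-by-term comparison of rational functions with no genuine computation, since the two right-hand sides have already been written in the same normal form. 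I would therefore organise the write-up around displaying the explicit correspondence and then remarking that substitution is immediate, flagging the index bookkeeping as the only point requiring care.
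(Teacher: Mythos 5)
Your proposal takes essentially the same approach as the paper: the paper's entire proof of Lemma \ref{lemma:correspondence_tau_CACO} consists of displaying the dictionary $b^{(1)}_{l_1,l_2}=\al_{12}$, $b^{(2)}_{l_2,l_3}=\al_{23}$, $b^{(0)}_{l_1,l_3}=\al_{31}$, $d_{12}=\de_1$, $d_{23}=\de_2$, $d_{13}=\de_3$, $u_{l_1,l_2,l_3}=u(l_1\ep_1+l_2\ep_2+l_3\ep_3)$ and treating the termwise substitution as immediate, exactly as you propose. Your extra bookkeeping (the $-1$ in $b^{(2)}$ being absorbed into the choice of $\al_2(0)-\al_3(0)$, and consistency with $b_0+b_1+b_2=1$) is correct and simply makes explicit what the paper leaves implicit.
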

\begin{proof}
The statement follows from the following correspondences:
\begin{subequations}
\begin{align}
 &b^{(1)}_{l_1,l_2}=\al_{12},\quad
 b^{(2)}_{l_2,l_3}=\al_{23},\quad
 b^{(0)}_{l_1,l_3}=\al_{31},\quad
 d_{12}=\de_1,\quad
 d_{23}=\de_2,\quad
 d_{13}=\de_3,\\
 &u_{l_1,l_2,l_3}=u(l_1\ep_1+l_2\ep_2+l_3\ep_3).
\end{align}
\end{subequations}
\end{proof}

\begin{remark}
Lemma \ref{lemma:correspondence_tau_CACO} means that the reduced system \eqref{eqn:PDES_reduction} can be obtained from the theory of the $\tau$-function associated with $A_2^{(1)\ast}$-type discrete Painlev\'e equations.
\end{remark}

We are now ready to prove Theorem \ref{theorem:reduction}.
The $(1,1,1)$-periodic reduction from the system \eqref{eqn:PDEs_P123456} to the system \eqref{eqn:PDES_reduction} given in Lemma \ref{lemma:periodic_reduction},
the relation between the system \eqref{eqn:PDES_reduction} and the system \eqref{eqn:u_type123} given in Lemma \ref{lemma:correspondence_tau_CACO},
and that between the system \eqref{eqn:u_type123} and the $A_2^{(1)\ast}$-type discrete Painlev\'e equations \eqref{eqns:dP_E6_intro} 
given in \S \ref{subsection:E6_PDEs} and \S \ref{subsection:E6_ODEs}
collectively give the proof of Theorem \ref{theorem:reduction}.
\section{Concluding remarks}\label{ConcludingRemarks}
In this paper, we considered a reduction of a system of {\PDE}s, which is unusual in the sense that it has the CACO property but not the widely studied CAC property. We showed how the system \eqref{eqn:PDEs_P123456} can be reduced to the $A_2^{(1)\ast}$-type discrete Painlev\'e equations \eqref{eqns:dP_E6_intro} using the affine Weyl group associated with the discrete Painlev\'e equations. 

In a forthcoming paper (N. Joshi and N. Nakazono), we will show how another system of {\PDE}s, which also has the CACO property, can be reduced to the $A_2^{(1)}$-type discrete Painlev\'e equations (see Table \ref{tab:sakai} for the distinction between $A_2^{(1)}$ and $A_2^{(1)\ast}$).
\subsection*{Acknowledgment}
N. Nakazono would like to thank Profs M. Noumi, Y. Ohta and Y. Yamada for inspiring and fruitful discussions.
This research was supported by an Australian Laureate Fellowship \# FL120100094 and grant \# DP160101728 from the Australian Research Council and JSPS KAKENHI Grant Numbers JP19K14559 and JP17J00092.
\appendix
\section{Consistency around a cuboctahedron property}\label{section:CACO_property}
In this appendix, we recall the definition of {\it consistency around a cuboctahedron}.
To define it, we also introduce an additional important property called {\it consistency around an octahedron}.
We refer the reader to \cite{JN2019:arxiv1906.06650} for detailed information about these properties.

\subsection{Consistency around an octahedron property}\label{subsection:def_CAO}
In this subsection, we give a definition of a consistency around an octahedron.

Let $u_i$, $i=1,\dots,6$, be variables and consider the octahedron shown in Figure \ref{fig:octahedron_3D}.
The planes that pass through the vertices $\{u_4,u_2,u_1,u_5\}$, $\{u_2,u_6,u_5,u_3\}$ and $\{u_6,u_4,u_3,u_1\}$ give 3 quadrilaterals that lie in the interior of the octahedron and we assign the quad-equations $Q_i$, $i=1,2,3$, to the quadrilaterals as the following:
\begin{equation}\label{eqn:octahedron_Q1Q2Q3}
 Q_1\left(u_4,u_2,u_1,u_5\right)=0,\quad
 Q_2\left(u_2,u_6,u_5,u_3\right)=0,\quad
 Q_3\left(u_6,u_4,u_3,u_1\right)=0.
\end{equation}
The consistency around an octahedron property is defined by the following.

\begin{figure}[htbp]
\begin{center}
 \includegraphics[width=0.3\textwidth]{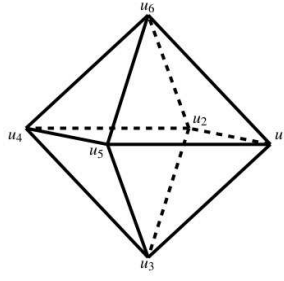}
\end{center}
\caption{An octahedron labelled with vertices $u_i$, $i=1,\dots,6$.}
\label{fig:octahedron_3D}
\end{figure}

\begin{definition}[CAO property\cite{JN2019:arxiv1906.06650}]\rm
The octahedron with quad-equations $\{Q_1,Q_2,Q_3\}$ is said to have a {\it consistency around an octahedron (CAO) property}, 
if each quad-equation can be obtained from the other two equations.
An octahedron is said to be a {\it CAO octahedron}, if it has the CAO property.
\end{definition}

\subsection{Consistency around a cuboctahedron property}\label{subsection:def_CACO}
In this subsection, we give a definition of a consistency around a cuboctahedron.

We consider the cuboctahedron centered around the origin whose twelve vertices are given by
$V=\set{\pm\ep_i\pm\ep_j}{i,j\in\bbZ,~1\leq i<j\leq 3}$,
where $\{\ep_1,\ep_2,\ep_3\}$ form the standard basis of $\bbR^3$.
We assign the variables $u(\bml)$ to the vertices $\bml\in V$
and impose the following relations:
\begin{subequations}\label{eqns:V_caco_general}
\begin{align}
 &Q_1\left(u_5,u_1,v_5,v_4\right)=0,\quad
 Q_2\left(v_2,v_1,u_2,u_4\right)=0,\quad
 Q_3\left(u_3,u_5,v_3,v_2\right)=0,
 \label{eqn:V_caco_general_1}\\
 &Q_4\left(v_6,v_5,u_6,u_2\right)=0,\quad
 Q_5\left(u_1,u_3,v_1,v_6\right)=0,\quad
 Q_6\left(v_4,v_3,u_4,u_6\right)=0,
 \label{eqn:V_caco_general_2}\\
 &Q_7\left(u_4,u_2,u_1,u_5\right)=0,\quad
 Q_8\left(u_2,u_6,u_5,u_3\right)=0,\quad
 Q_9\left(u_6,u_4,u_3,u_1\right)=0,
 \label{eqn:V_caco_general_3}
\end{align}
\end{subequations}
where $Q_i$, $i=1,\dots,9$, are quad-equations and
\begin{subequations}
\begin{align}
 &u_1=u(\ep_2+\ep_3),
 &&u_2=u(-\ep_1-\ep_3),
 &&u_3=u(\ep_1+\ep_2),
 &&u_4=u(-\ep_2-\ep_3),\\
 &u_5=u(\ep_1+\ep_3),
 &&u_6=u(-\ep_1-\ep_2),
 &&v_1=u(\ep_2-\ep_3),
 &&v_2=u(\ep_1-\ep_3),\\
 &v_3=u(\ep_1-\ep_2),
 &&v_4=u(-\ep_2+\ep_3),
 &&v_5=u(-\ep_1+\ep_3),
 &&v_6=u(-\ep_1+\ep_2).
\end{align}
\end{subequations}
Note that quad-equations $Q_i$, $i=1,\dots,6$, are assigned to the faces of the cuboctahedron (see Figure \ref{fig:cuboctahedron_3D_a}).
Moreover, $u_i$, $i=1,\dots,6$, collectively form the vertices of an octahedron
and quad-equations $Q_i$, $i=7,8,9$, are assigned to the quadrilaterals that appear as sections passing through four vertices of the octahedron
(see Figure \ref{fig:cuboctahedron_3D_b}).

\begin{figure}[htbp]
\centering
\begin{subfigure}[b]{0.43\textwidth}
\centering
 \includegraphics[width=\textwidth]{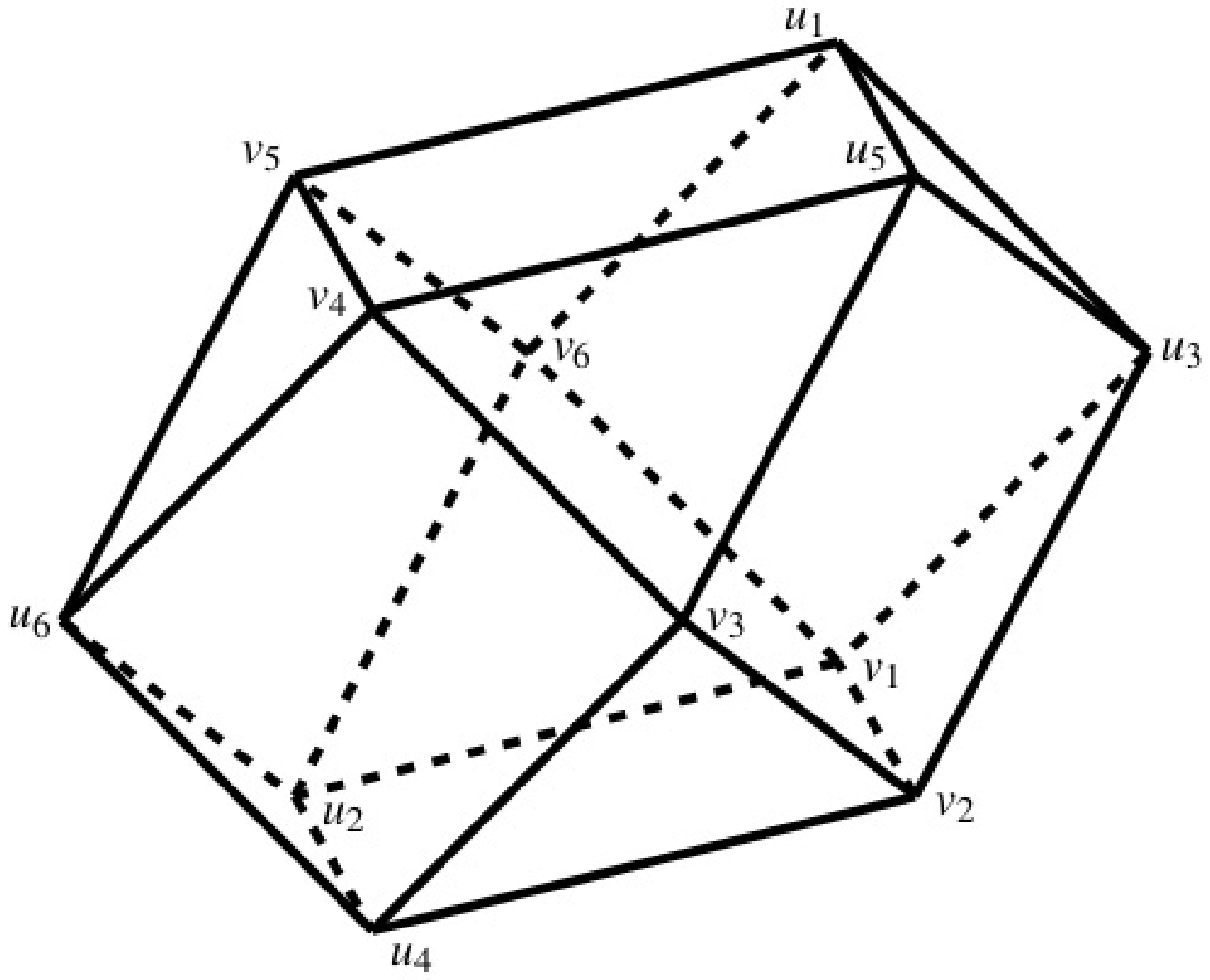}
 \caption{A cuboctahedron labelled with vertices $u_i$ and $v_j$, $i,j=1,\dots,6$.}
 \label{fig:cuboctahedron_3D_a}
\end{subfigure}
\hspace{2em}
\begin{subfigure}[b]{0.43\textwidth}
\centering
 \includegraphics[width=\textwidth]{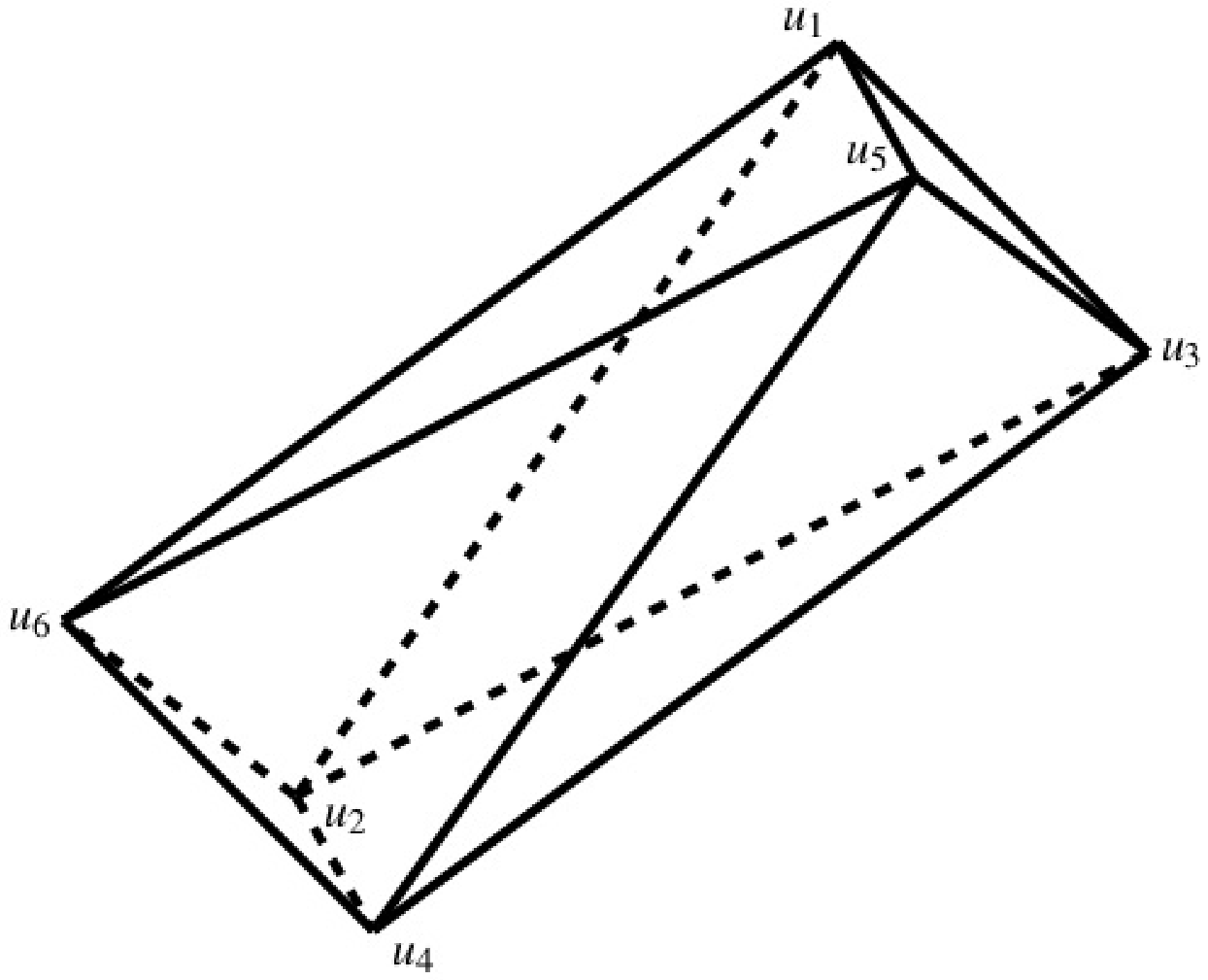}
 \caption{An octahedron labelled with vertices $u_i$, $i=1,\dots,6$.}
 \label{fig:cuboctahedron_3D_b}
\end{subfigure}
 \caption{A cuboctahedron and an interior octahedron.}
 \label{fig:cuboctahedron_3D}
\end{figure}

We are now in a position to give the following definitions.
\begin{definition}[CACO property\cite{JN2019:arxiv1906.06650}]\label{def:CACO_cuboctahedron}\rm
The cuboctahedron with quad-equations $\{Q_1,\dots,Q_9\}$ is said to have a {\it consistency around a cuboctahedron (CACO) property}, 
if the following properties hold.
\begin{description}
\item[(i)] 
The octahedron with quad-equations $\{Q_7,Q_8,Q_9\}$ has the CAO property.
\item[(ii)]
Assume that $u_1,\dots,u_6$ are given so as to satisfy $Q_i=0$, $i=7,8,9$, and, in addition, $v_k$ is given, for some $k\in\{1,\dots,6\}$.
Then, quad-equations $Q_i$, $i=1\dots,6$, determine the variables $v_j$, $j\in\{1,\dots,6\}\backslash\{k\}$, uniquely.
\end{description}
A cuboctahedron is said to be a {\it CACO cuboctahedron}, if it has the CACO property.
\end{definition}

\begin{definition}[Square property\cite{JN2019:arxiv1906.06650}]\rm
The CACO cuboctahedron with quad-equations $\{Q_1,\dots,Q_9\}$ is said to have a {\it square property}, 
if there exist polynomials $K_i=K_i(x,y,z,w)$, $i=1,2,3$, where
$\deg_x{K_i}=\deg_w{K_i}=1$ and $1\leq\deg_y{K_i},\,\deg_z{K_i}$,
satisfying 
\begin{equation}
 K_1(v_1,u_1,u_4,v_4)=0,\quad 
 K_2(v_2,u_2,u_5,v_5)=0,\quad
 K_3(v_3,u_3,u_6,v_6)=0.
\end{equation}
Then, each equation $K_i=0$ is called a {\it square equation}.
\end{definition}

\subsection{CACO property of \PDE s}
We now explain how to associate quad-equations with \PDE s in three-dimensional space by using the system of {\PDE}s \eqref{eqn:PDEs_P123456} as an example. 
This requires us to consider overlapping cuboctahedra that lead to two-dimensional tessellations consisting of quadrilaterals. 
For each given cuboctahedron, there are twelve overlapping cuboctahedra. 

The twelve overlapping cuboctahedra around a given one provide six directions of tiling by quadrilaterals. For later convenience, we label directions by $\ep_i\pm \ep_j$, $1\leq i<j\leq 3$. Vertices labelled in this way form the set $\Omega$ given by \eqref{eqn:def_Omega}.
Such vertices are interpreted as being iterated on each successive cuboctahedron. 
We here consider the system of {\PDE}s \eqref{eqn:PDEs_P123456}.
For simplicity, we abbreviate each respective equation in Equations \eqref{eqn:PDEs_P123456} as 
\begin{subequations}
\begin{align}
 &P_1\left(u_{\ol{13}},u_{\ol{23}},u_{\ul{1}\ol{3}},u_{\ul{2}\ol{3}}\right)=0,~
 P_2\left(u_{\ol{12}},u_{\ol{13}},u_{\ol{1}\ul{2}},u_{\ol{1}\ul{3}}\right)=0,~
 P_3\left(u_{\ol{23}},u_{\ol{12}},u_{\ol{2}\ul{3}},u_{\ul{1}\ol{2}}\right)=0,\\
 &P_4\left(u_{\ul{23}},u_{\ul{13}},u_{\ol{23}},u_{\ol{13}}\right)=0,~
 P_5\left(u_{\ul{13}},u_{\ul{12}},u_{\ol{13}},u_{\ol{12}}\right)=0,~
 P_6\left(u_{\ul{12}},u_{\ul{23}},u_{\ol{12}},u_{\ol{23}}\right)=0.
\end{align}
\end{subequations}

Conversely, given $\bml\in\Omega$, we obtain the cuboctahedron centered around $\bml$. We refer to its quad-equations as before by
$\{Q_1(\bml),\dots,Q_9(\bml)\}$. Moreover, the overlapped region gives an octahedron centred around $\bml+\ep_3$, and we label its quad-equations by $\{\hat{Q}_1(\bml),\hat{Q}_2(\bml),\hat{Q}_3(\bml)\}$.

Each such quad-equation is identified with the 6 partial difference equations given in Equations \eqref{eqn:PDEs_P123456} in the following way.
For $Q_1,\ldots, Q_9$, we use
\begin{subequations}
\begin{align}
 &Q_1(\bml)=P_1\left(u_{\ol{13}},u_{\ol{23}},u_{\ul{1}\ol{3}},u_{\ul{2}\ol{3}}\right)=0,\quad
 Q_2(\bml)=P_1\left(u_{\ol{1}\ul{3}},u_{\ol{2}\ul{3}},u_{\ul{13}},u_{\ul{23}}\right)=0,\\
 &Q_3(\bml)=P_2\left(u_{\ol{12}},u_{\ol{13}},u_{\ol{1}\ul{2}},u_{\ol{1}\ul{3}}\right)=0,\quad
 Q_4(\bml)=P_2\left(u_{\ul{1}\ol{2}},u_{\ul{1}\ol{3}},u_{\ul{12}},u_{\ul{13}}\right)=0,\\
 &Q_5(\bml)=P_3\left(u_{\ol{23}},u_{\ol{12}},u_{\ol{2}\ul{3}},u_{\ul{1}\ol{2}}\right)=0,\quad
 Q_6(\bml)=P_3\left(u_{\ul{2}\ol{3}},u_{\ol{1}\ul{2}},u_{\ul{23}},u_{\ul{12}}\right)=0,\\
 &Q_7(\bml)=P_4\left(u_{\ul{23}},u_{\ul{13}},u_{\ol{23}},u_{\ol{13}}\right)=0,\quad
 Q_8(\bml)=P_5\left(u_{\ul{13}},u_{\ul{12}},u_{\ol{13}},u_{\ol{12}}\right)=0,\\
 &Q_9(\bml)=P_6\left(u_{\ul{12}},u_{\ul{23}},u_{\ol{12}},u_{\ol{23}}\right)=0,
\end{align}
\end{subequations}
and for $\hat{Q}_1,\hat{Q}_2,\hat{Q}_3$, we use
\begin{subequations}
\begin{align}
 &\hat{Q}_1(\bml)=P_1\left(u_{\ol{13}},u_{\ol{23}},u_{\ul{1}\ol{3}},u_{\ul{2}\ol{3}}\right)=0,\quad
 \hat{Q}_2(\bml)=P_2\left(u_{\ol{23}},u_{\ol{33}},u_{\ul{2}\ol{3}},u\right)=0,\\
 &\hat{Q}_3(\bml)=P_3\left(u_{\ol{33}},u_{\ol{13}},u,u_{\ul{1}\ol{3}}\right)=0. 
\end{align}
\end{subequations}
Then, the following proposition holds.
\begin{proposition}[\cite{JN2019:arxiv1906.06650}]
The system of {\PDE}s \eqref{eqn:PDEs_P123456} has the CACO and square properties, 
that is, the following statements hold.
\begin{description}
\item[(i)]
The cuboctahedra with quad-equations $\{Q_i(\bml)\}$ have the CACO and square properties.
\item[(ii)]
The square equations are consistent with the {\PDE}s \eqref{eqn:PDEs_P123456}.
\item[(iii)]
The octahedra with quad-equations $\{\hat{Q}_i(\bml)\}$ have the CAO property.
\end{description}
\end{proposition}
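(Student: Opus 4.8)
The plan is to verify the three asserted properties—CACO, square, and CAO for the octahedra $\{\hat{Q}_i(\bml)\}$—by direct computation, exploiting the fact that each $Q_i(\bml)$ and $\hat{Q}_i(\bml)$ is an explicit specialization of one of the six polynomials $P_1,\dots,P_6$ obtained by reading off \eqref{eqn:PDEs_P123456} on the relevant quadrilateral. First I would set up coordinates: fix $\bml\in\Omega$, write out the twelve vertex values $u_i,v_j$ of the cuboctahedron centred at $\bml$ together with the two additional ``overlap'' values $u$, $u_{\ol{33}}$ appearing in $\hat{Q}_2,\hat{Q}_3$, and substitute the coefficient formulas for $\al_{ij}$, $\ga_i$ (noting the parity factors $(-1)^{l_i+l_j}$, which are locally constant on each cuboctahedron and hence may be treated as fixed signs). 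This reduces everything to rational-function identities in finitely many variables over the field $\bbC(\al_i(0),c,\de_j)$.

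The key steps, in order: (1) For part (iii), check the CAO property of $\{\hat{Q}_1,\hat{Q}_2,\hat{Q}_3\}$. Since each $\hat{Q}_i$ is affine-linear and irreducible in its four arguments, I would solve $\hat{Q}_1=0$ for one shared vertex and $\hat{Q}_2=0$ for another, substitute into $\hat{Q}_3$, and confirm the result is identically zero; by the symmetry of \eqref{eqn:PDEs_P123456} under cyclic permutation of $(i,j,k)$ it suffices to do this once. (2) For part (i), the CAO half is the statement that $\{Q_7,Q_8,Q_9\}=\{P_4,P_5,P_6\}$ (read on the interior octahedron) is consistent; this is the same computation as in (1) up to relabelling. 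The genuinely new content is Definition~\ref{def:CACO_cuboctahedron}(ii): assuming $u_1,\dots,u_6$ satisfy $Q_7=Q_8=Q_9=0$ and one $v_k$ is prescribed, show that $Q_1,\dots,Q_6$ determine the remaining five $v_j$ uniquely and consistently. Concretely, each $v_j$ sits on exactly two of the face quad-equations $Q_1,\dots,Q_6$; I would propagate around the ``equator'' of the cuboctahedron—solve for $v_{k+1}$ from the face equation it shares with $v_k$, then $v_{k+2}$, and so on—and then check that returning to $v_k$ via the sixth face equation reproduces the prescribed value, again using the octahedron relations $Q_7=Q_8=Q_9=0$ to kill the residual terms. (3) For the square property, one must exhibit polynomials $K_1,K_2,K_3$ with the stated degrees satisfying $K_i(v_i,u_i,u_{i+3},v_{i+3})=0$; here I would eliminate all variables except $v_i,u_i,u_{i+3},v_{i+3}$ from the subsystem of the $Q$'s (and the octahedron equations) that constrains these four, read off the resulting polynomial relation, and verify it has the required degree pattern $\deg_x=\deg_w=1$, $\deg_y,\deg_z\geq1$. (4) For part (ii)—the square equations being consistent with the \PDE s \eqref{eqn:PDEs_P123456}—I would check that translating $K_i=0$ by each lattice shift $\ep_m$ is compatible with the iterated cuboctahedron structure, i.e. that the $K_i$ obtained from overlapping cuboctahedra agree on shared vertices; this follows once one knows the $K_i$ are uniquely characterized by the elimination in step (3), since the same elimination is performed in each overlapping copy.

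The main obstacle will be step (2), the verification of Definition~\ref{def:CACO_cuboctahedron}(ii): unlike ordinary CAC, there is no single ``bottom face'' whose three computed values must merely agree—one has a closed cycle of six face equations around the equator of the cuboctahedron, and the consistency of the round trip is a codimension-one condition that holds only because $u_1,\dots,u_6$ already lie on the octahedron variety cut out by $Q_7=Q_8=Q_9=0$. Carrying out the elimination cleanly—choosing which vertex to solve each $Q_i$ for so that denominators do not vanish generically, and organizing the cancellation so that the octahedron relations can be invoked at the right moment—is where the real bookkeeping lies. I expect the computation to be substantially lightened by the $\bbZ/3\bbZ$ cyclic symmetry $(i,j,k)\mapsto(j,k,i)$ built into \eqref{eqn:PDEs_P123456}, which reduces the number of genuinely distinct cases, and by the Möbius (linear-fractional) form of the equations, which makes each elimination step a composition of Möbius maps whose coefficients can be tracked as $2\times2$ matrices; the closure condition then becomes a clean matrix identity rather than an unstructured polynomial one. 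Since the proposition is quoted from \cite{JN2019:arxiv1906.06650}, in the present paper it suffices to cite that reference, but the outline above is the computation underlying it.
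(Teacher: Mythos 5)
The paper gives no proof of this proposition---it is imported verbatim from \cite{JN2019:arxiv1906.06650}, so the citation you supply at the end is exactly what the paper itself does, and is all that is required here. Your accompanying outline of the underlying verification (CAO by substituting two solved quad-equations into the third; property (ii) of Definition \ref{def:CACO_cuboctahedron} via the six-cycle of face equations $v_5\!-\!v_4\!-\!v_3\!-\!v_2\!-\!v_1\!-\!v_6\!-\!v_5$ whose closure, as a composition of M\"obius maps, is forced by $Q_7=Q_8=Q_9=0$; the square equations by elimination) is structurally faithful to how the result is established in the cited reference, so no gap needs to be flagged.
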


\section{Extended affine Weyl group of type $E_6^{(1)}$ and $\tau$-variables}\label{section:Weyl_E6}
In this appendix, we review the action of the extended affine Weyl group of type $E_6^{(1)}$ given in \cite{TsudaT2008:MR2425662}, which is the symmetry group of $A_2^{(1)\ast}$-type discrete Painlev\'e equations.

Let $a_i$, $i=0,\dots,6$, be parameters satisfying the condition \eqref{eqn:para_a_E6}
and $\tau^{(i)}_j$, $i=1,2,3$, $j=0,1,2,3$, be variables.
The actions of transformations $s_i$, $i=0,\dots,6$, and $\iota_j$, $j=1,2,3$, 
on the parameters are given by
\begin{subequations}\label{eqns:WE6_paras}
\begin{align}
 &s_0:(a_0,a_6)\mapsto(-a_0,a_6+a_0),\qquad
 s_1:(a_1,a_2)\mapsto(-a_1,a_2+a_1),\\
 &s_2:(a_1,a_2,a_3)\mapsto(a_1+a_2,-a_2,a_3+a_2),\\
 &s_3:(a_2,a_3,a_4,a_6)\mapsto(a_2+a_3,-a_3,a_4+a_3,a_6+a_3),\\
 &s_4:(a_3,a_4,a_5)\mapsto(a_3+a_4,-a_4,a_5+a_4),\qquad
 s_5:(a_4,a_5)\mapsto(a_4+a_5,-a_5),\\
 &s_6:(a_0,a_3,a_6)\mapsto(a_0+a_6,a_3+a_6,-a_6),\qquad\\
 &\iota_1 a_{\{0,5,4,6\}}\mapsto a_{\{5,0,4,6\}},\quad
 \iota_2 a_{\{0,1,2,6\}}\mapsto a_{\{1,0,6,2\}},\quad
 \iota_3 a_{\{1,5,2,4\}}\mapsto a_{\{5,1,4,2\}},
\end{align}
\end{subequations}
while those on the $\tau$-variables $\tau^{(i)}_j$, $i=1,2,3$, $j=0,1,2,3$, are given by
\begin{subequations}\label{eqns:WE6_tau}
\begin{align}
 s_0&:(\tau^{(3)}_2,\tau^{(3)}_3)\mapsto(\tau^{(3)}_3,\tau^{(3)}_2),\qquad
 s_1:(\tau^{(1)}_2,\tau^{(1)}_3)\mapsto(\tau^{(1)}_3,\tau^{(1)}_2),\\
 s_2&:(\tau^{(1)}_1,\tau^{(1)}_2)\mapsto(\tau^{(1)}_2,\tau^{(1)}_1),\\
 &:(\tau^{(2)}_0,\tau^{(3)}_0)\mapsto
 \left(\frac{(a_2+a_3) \tau^{(1)}_1 \tau^{(2)}_0-a_2 \tau^{(2)}_1 \tau^{(1)}_0}{a_3 \tau^{(1)}_2},
 \frac{(a_2+a_3) \tau^{(1)}_1 \tau^{(3)}_0-a_2 \tau^{(3)}_1 \tau^{(1)}_0}{a_3 \tau^{(1)}_2}\right),\\
 s_3&:(\tau^{(1)}_1,\tau^{(2)}_1,\tau^{(3)}_1,\tau^{(1)}_0,\tau^{(2)}_0,\tau^{(3)}_0)\mapsto
 (\tau^{(1)}_0,\tau^{(2)}_0,\tau^{(3)}_0,\tau^{(1)}_1,\tau^{(2)}_1,\tau^{(3)}_1),\\
 s_4&:(\tau^{(2)}_1,\tau^{(2)}_2)\mapsto(\tau^{(2)}_2,\tau^{(2)}_1),\\
 &:(\tau^{(1)}_0,\tau^{(3)}_0)\mapsto
 \left(\frac{(a_3+a_4) \tau^{(2)}_1 \tau^{(1)}_0-a_4 \tau^{(1)}_1 \tau^{(2)}_0}{a_3 \tau^{(2)}_2},
 \frac{(a_3+a_4) \tau^{(2)}_1 \tau^{(3)}_0-a_4 \tau^{(3)}_1 \tau^{(2)}_0}{a_3 \tau^{(2)}_2}\right),\\
 s_5&:(\tau^{(2)}_2,\tau^{(2)}_3)\mapsto(\tau^{(2)}_3,\tau^{(2)}_2),\\
 s_6&:(\tau^{(3)}_1,\tau^{(3)}_2)\mapsto(\tau^{(3)}_2,\tau^{(3)}_1),\\
 &:(\tau^{(1)}_0,\tau^{(2)}_0)\mapsto
 \left(\frac{(a_3+a_6) \tau^{(3)}_1 \tau^{(1)}_0-a_6 \tau^{(1)}_1 \tau^{(3)}_0}{a_3 \tau^{(3)}_2},
 \frac{(a_3+a_6) \tau^{(3)}_1 \tau^{(2)}_0-a_6 \tau^{(2)}_1 \tau^{(3)}_0}{a_3 \tau^{(3)}_2}\right),\\
 \iota_1&:(\tau^{(2)}_j,\tau^{(3)}_j)\mapsto(\tau^{(3)}_j,\tau^{(2)}_j),\qquad
 \iota_2:(\tau^{(1)}_j,\tau^{(3)}_j)\mapsto(\tau^{(3)}_j,\tau^{(1)}_j),\\
 \iota_3&:(\tau^{(1)}_j,\tau^{(2)}_j)\mapsto(\tau^{(2)}_j,\tau^{(1)}_j),\qquad
 j=0,1,2,3.
\end{align}
\end{subequations}

\begin{remark}
\quad\\[-1.2em]
\begin{description}
\item[(i)]
Each transformation here defined is an isomorphism from the field of rational functions $K(\{\tau^{(i)}_j\})$, where $K=\mathbb{C}(\{a_i\})$, to itself.
\item[(ii)] 
We follow the convention of Remark \ref{remark:para_tau_action}, for Equations \eqref{eqns:WE6_paras} and \eqref{eqns:WE6_tau}. That is, each transformation acts as an identity on  parameters or variables not appearing in its definition.
\end{description}
\end{remark}

The transformations collectively form the extended affine Weyl group of type $E_6^{(1)}$, denoted by \eqref{eqn:WE6}.
Indeed, the following fundamental relations hold:
\begin{subequations}\label{eqns:fundamental_E6}
\begin{align}
 &(s_is_j)^{A_{ij}}=1,\quad
 {\iota_1}^2={\iota_2}^2={\iota_3}^2=1,\quad
 \iota_1\iota_2=\iota_2\iota_3=\iota_3\iota_1,\quad
 \iota_2\iota_1=\iota_3\iota_2=\iota_1\iota_3,\\
 &\iota_1 s_{\{0,5,4,6\}}= s_{\{5,0,4,6\}}\iota_1,\quad
 \iota_2 s_{\{0,1,2,6\}}= s_{\{1,0,6,2\}}\iota_2,\quad
 \iota_3 s_{\{1,5,2,4\}}= s_{\{5,1,4,2\}}\iota_3,
\end{align}
\end{subequations}
where $i,j\in\{0,1,\dots,6\}$ and
\begin{equation}
 (A_{ij})_{i,j=0}^6
 =\begin{pmatrix}
 2&0&0&0&0&0&3\\[-0.8em]
 0&2&3&0&0&0&0\\[-0.8em]
 0&3&2&3&0&0&0\\[-0.8em]
 0&0&3&2&3&0&3\\[-0.8em]
 0&0&0&3&2&3&0\\[-0.8em]
 0&0&0&0&3&2&0\\[-0.8em]
 3&0&0&3&0&0&2
\end{pmatrix}.
\end{equation}

\begin{remark}
The correspondence between the notations in this paper and those in \cite{TsudaT2008:MR2425662} is given by
$\tau^{(i)}_j\to \tau^{i}_j$ and 
$\tau^{(i)}_0\to s_3(\tau^i_1)$, where $i,j=1,2,3$.
\end{remark}

\def\cprime{$'$} \def\cprime{$'$}

\end{document}